\documentclass{article}
\usepackage[utf8]{inputenc}
\usepackage{hyperref}
\usepackage{graphicx}
\usepackage{amsmath}
\usepackage{amssymb}
\usepackage{amsfonts}
\usepackage{amsthm}
\usepackage{MnSymbol}
\usepackage{thm-restate}
\usepackage{cleveref}
\usepackage{caption}
\usepackage{subcaption}
\usepackage[ruled,vlined,linesnumbered]{algorithm2e}
\newcommand{\chaincover}{\mathcal{C}}
\newcommand{\pathcover}{\mathcal{P}}
\newcommand{\partition}{\mathcal{K}}
\newcommand{\sizesplit}{SIZE-SPLIT}
\newcommand{\search}{SEARCH}
\newcommand{\merge}{MERGE}
\newcommand{\splitt}{SPLIT}
\newcommand{\join}{JOIN}

\newcommand{\some}{SOME}
\newcommand{\select}{SELECT}
\newcommand{\flowG}{\mathcal{G}}
\newcommand{\flowV}{\mathcal{V}}
\newcommand{\flowE}{\mathcal{E}}

\newtheorem{lemma}{Lemma}
\newtheorem{corollary}{Corollary}
\crefname{algocf}{alg.}{algs.}
\Crefname{algocf}{Algorithm}{Algorithms}

\title{Minimum Chain Cover in Almost Linear Time\thanks{This work was funded by the Academy of Finland
(grants No. 352821, 328877). I am very grateful to Alexandru~I.~Tomescu and Brendan Mumey for initial discussions on flow decomposition of a min flow representing an MPC.}}
\author{Manuel C\'aceres\thanks{Department of Computer Science, University of Helsinki, Finland, \texttt{manuel.caceresreyes@helsinki.fi}.}}
\date{}

\begin{document}

\maketitle
\begin{abstract}
    A minimum chain cover (MCC) of a $k$-width directed acyclic graph (DAG) $G = (V, E)$ is a set of $k$ chains (paths in the transitive closure) of $G$ such that every vertex appears in at least one chain in the cover. 
    
    The state-of-the-art solutions for MCC run in time $\tilde{O}(k(|V|+|E|))$ [M\"akinen et~at., TALG], $O(T_{MF}(|E|) + k|V|)$, $O(k^2|V| + |E|)$ [C\'aceres et~al., SODA 2022], $\tilde{O}(|V|^{3/2} + |E|)$ [Kogan and Parter, ICALP 2022] and $\tilde{O}(T_{MCF}(|E|) + \sqrt{k}|V|)$ [Kogan and Parter, SODA 2023], where $T_{MF}(|E|)$ and $T_{MCF}(|E|)$ are the running times for solving maximum flow (MF) and minimum-cost flow (MCF), respectively.
    
    In this work we present an algorithm running in time $O(T_{MF}(|E|) + (|V|+|E|)\log{k})$. By considering the recent result for solving MF [Li et~al., FOCS 2022] our algorithm is the first running in almost linear time. Moreover, our techniques are deterministic and derive a deterministic near-linear time algorithm for MCC if the same is provided for MF.
    
    At the core of our solution we use a modified version of the mergeable dictionaries [Farach and Thorup, Algorithmica], [Iacono and \"Ozkan, ICALP 2010] data structure boosted with the \sizesplit{} operation and answering queries in amortized logarithmic time, which can be of independent interest.
    
\end{abstract}

\section{Introduction}
Computing a minimum-sized set of chains covering all vertices of a DAG $G = (V, E)$ is a well known poly-time solvable problem~\cite{dilworth2009decomposition,fulkerson1956note}, with many applications in widespread research fields such as bioinformatics~\cite{trapnell2010transcript,caceres2021safety,cotumaccio2021indexing,caceres2022width,CJ23,rizzo2023chaining}. Here we call such an object a \emph{minimum chain cover} (an MCC) $\chaincover$ containing $k$ chains $\chaincover = \{C_1,\ldots,C_k\}$, which are paths in the transitive closure of $G$. The \emph{size} $k$ of an MCC is known as the \emph{width} of $G$ and equals the maximum number of pairwise unreachable vertices (antichain) of $G$, by Dilworth's theorem~\cite{dilworth2009decomposition} on partially ordered sets (posets).\\

\par{\noindent\textbf{The history of MCC.}} It was Fulkerson~\cite{fulkerson1956note} in the 1950s the first to show a poly-time algorithm for posets (transitive DAGs). His algorithm reduces the problem to finding a maximum matching in a bipartite graph with $2|V|$ vertices and $|E|$ edges, and thus can be solved in $O(|E|\sqrt{|V|})$ time by using the Hopcroft-Karp algorithm~\cite{hopcroft1973n}\footnote{Recent fast solutions for maximum matching do not speed up this approach as one needs to compute the transitive closure of the DAG first.}. Improvements on these ideas were derived in the $O(|V|^2 + k\sqrt{k}|V|)$ and $O(\sqrt{|V|}|E| + k\sqrt{k}|V|)$ time algorithms of Chen and Chen~\cite{chen2008efficient,chen2014graph}, and the $O(k|V|^2)$ time algorithm of Felsner et~al.~\cite{felsner2003recognition}. In the same article, Felsner et~al. showed a combinatorial approach to compute a \emph{maximum antichain} (MA) in near-linear time for the cases $k=2,3,4$, which was latter generalized to $O(f(k)(|V|+|E|))$~\cite{caceres2021a}, $f(k)$ being an exponential function. State-of-the-art approaches improve exponentially on its running time dependency on $k$. These approaches solve the strongly related problem of \emph{minimum path cover} (MPC). An MPC $\pathcover$ is a minimum-sized set of \emph{paths} covering the vertices of $G$, and thus it is also a valid chain cover. Moreover, since every MCC can be transformed into an MPC (by connecting consecutive vertices in the chains), the \emph{size} of an MPC also equals the width $k$.\\

\par{\noindent\textbf{The state-of-the-art for MCC.}} M\"akinen et~al.~\cite{makinen2019sparse} provided an algorithm for MPC, running in time $O(k(|V|+|E|)\log{|V|}) = \widetilde{O}(k(|V|+|E|))$ while C\'aceres et~al.~\cite{caceres2022sparsifying} presented the first $O(k^2|V|+|E|)$ parameterized linear time algorithm. Both algorithms are based on a classical reduction to \emph{minimum flow}~\cite{ntafos1979path}, which we will revisit later. In the same work, C\'aceres et~al.~\cite{caceres2022sparsifying} showed how to compute an MPC in time $O(T_{MF}(|E|) + ||\pathcover||)$ and a MA in time $O(T_{MF}(|E|))$, where $T_{MF}(|E|)$ is the running time for solving \emph{maximum flow} (MF) and $||\pathcover||$ is the total length of the reported MPC. As such, by using the recent result for MF of Chen et~al.~\cite{chen2022maximum} we can solve MPC and MA in (almost) optimal (input+output size) time. However, the same does not apply to MCC as the total length of an MCC can be exactly $|V|$ (e.g. by removing repeated vertices) while the total length of an MPC can be $\Omega(k|V|)$ in the worst case, as shown in \Cref{fig:worst-case-mpc}. 

The $k|V|$ barrier was recently overcame by Kogan and Parter~\cite{kogan2022beating,kogan2023faster} by reducing the total length of an MPC. They obtain this improvement by using \emph{reachability shortcuts} and by devising a more involved reduction to \emph{minimum cost flow} (MCF). Their algorithms run in time $\widetilde{O}(|E|+|V|^{3/2})$~\cite{kogan2022beating} and $\widetilde{O}(\sqrt{k}|V| + |E|^{1+o(1)})$~\cite{kogan2023faster} using the MCF algorithms of Bran et~al.~\cite{van2021minimum} and Chen et~al.~\cite{chen2022maximum}, respectively.\\

\begin{figure}
      \centering
      \includegraphics[width=130mm]{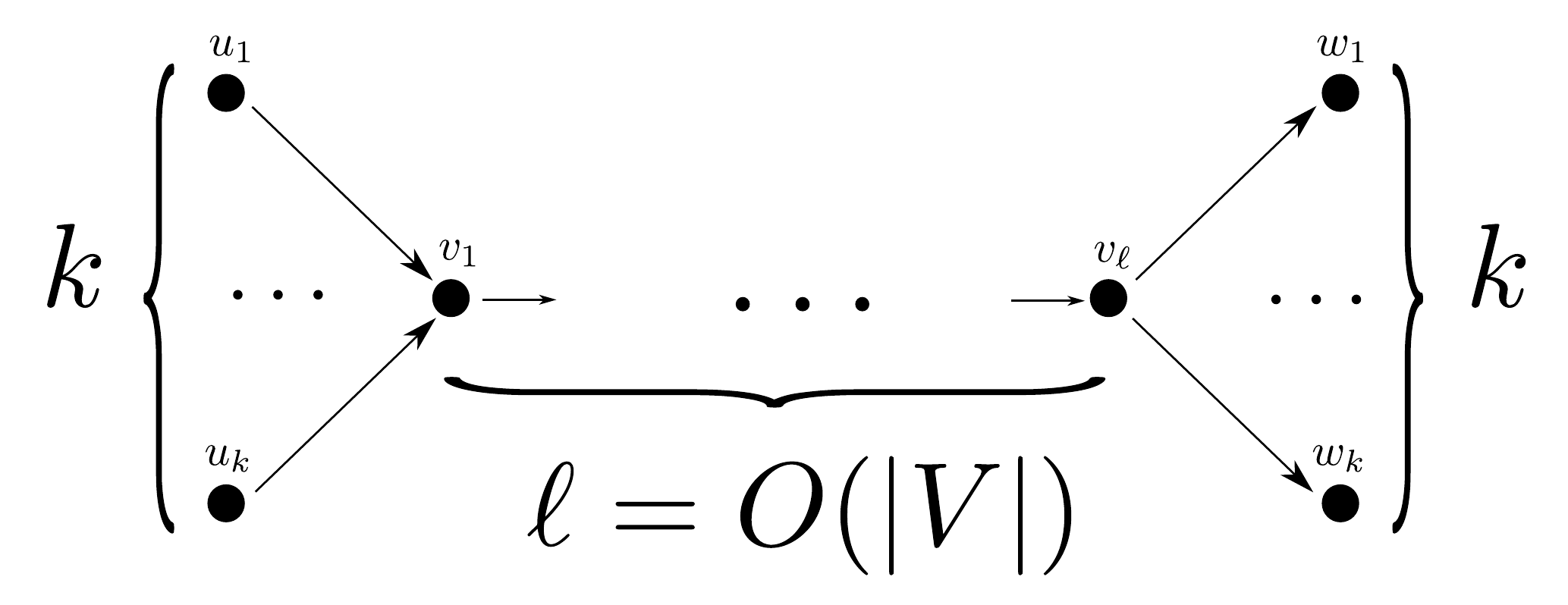}
      \caption{Example $k$-width DAG where every MPC $\pathcover$ has total length $||\pathcover|| = \Omega(k|V|)$. Indeed, every path in an MPC must start from some $u_i$ and traverse the middle path $v_1,\ldots,v_\ell$ until reaching some $w_j$, since otherwise it is not possible to cover the rest of the graph minimally. Moreover, if $k = \ell = |V|/3$, then $||\pathcover|| = \Omega(|V|^2)$. On the other hand, there is always an MCC $\chaincover$ of total size $||\chaincover|| = |V|$ (in this case only one of the chains cover the vertices of the middle path).}
      \label{fig:worst-case-mpc}
  \end{figure}
  
  In this paper we present an algorithm for MCC improving its running time dependency on $k$ exponentially w.r.t. the state-of-the-art.
  
  \begin{restatable}{theorem}{mainTheorem}\label{thm:mainTheorem}
   Given a $k$-width DAG $G=(V,E)$ we can compute a minimum chain cover in time $O(T_{MF}(|E|)+ (|V|+|E|)\log{k})$, where $T_{MF}(|E|)$ is the time for solving maximum flow.
  \end{restatable}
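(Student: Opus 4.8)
The plan is to follow the classical reduction of minimum \emph{path} cover to \emph{minimum flow}, but to dodge the $\Omega(k|V|)$ barrier coming from the size of an MPC by decomposing the flow directly into \emph{chains} in a single topological sweep supported by a mergeable‑dictionary structure that also offers \sizesplit{}. First I would build the standard flow network $\flowG$ obtained from $G$ by splitting each $v\in V$ into $v^-,v^+$ joined by an edge of demand (lower bound) $1$, replacing each $(u,v)\in E$ by $(u^+,v^-)$, and attaching a source/sink to every $v^-$/$v^+$, all capacities $\infty$. A minimum $s$-$t$ flow $f$ of $\flowG$ has value exactly the width $k$ and, after fixing any feasible flow (e.g.\ the trivial cover by $|V|$ singleton paths), can be computed by one maximum‑flow computation in the residual network, in $O(T_{MF}(|E|))$ time, the result being integral because the MF algorithm returns an integral flow. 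Write $d(v)=f(v^-,v^+)\ge 1$, let $f(u,v)$ be the flow on $(u^+,v^-)$, and let $\mathrm{st}(v),\mathrm{en}(v)\ge 0$ be the flow leaving the source into $v^-$ and entering the sink from $v^+$, so that $\mathrm{st}(v)+\sum_{(u,v)\in E}f(u,v)=d(v)=\mathrm{en}(v)+\sum_{(v,w)\in E}f(v,w)$ and $\sum_v\mathrm{st}(v)=\sum_v\mathrm{en}(v)=k$.

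Conceptually $f$ decomposes into $k$ source‑to‑sink paths, i.e.\ an MPC; the point I would exploit is that at each vertex $v$ we are free to let \emph{any one} of the $d(v)$ flow units through $v$ ``claim'' $v$: whichever in‑edge that unit used and whichever out‑edge it leaves on, the part of the flow it follows is a path of $G$, so the previous vertex it claimed reaches $v$ and $v$ reaches the next vertex it will claim, hence the vertices a fixed unit claims form a valid chain (a path in the transitive closure). I would process the vertices in topological order while maintaining, for every edge $(u,v)$ with $f(u,v)>0$, a dictionary $D_{(u,v)}$ of $f(u,v)$ \emph{tokens}, each carrying the identity (a key in $\{1,\dots,k\}$, fixed at creation) of an active chain whose last claimed vertex is $u$; alongside I keep each chain's output list and its last claimed vertex. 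At any moment the active chains are distinct, so the $D_{(u,v)}$ are pairwise‑disjoint sets. On reaching $v$ I: (i) create $\mathrm{st}(v)$ fresh chains in a new dictionary; (ii) \merge{} it with all the $D_{(u,v)}$ into one dictionary $D_v$ of size $d(v)$; (iii) query $D_v$ for some token (say the minimum), append $v$ to its chain, and set that chain's last vertex to $v$; (iv) \sizesplit{} $D_v$ into groups of sizes $\mathrm{en}(v)$ (these chains end at $v$ and are discarded) and $f(v,w)$ for each out‑edge $(v,w)$, which become the new dictionaries $D_{(v,w)}$. As $d(v)\ge 1$, every vertex gets claimed, and exactly once (a vertex is processed once and a token only ever moves forward in topological order); so the run produces $k$ chains covering $V$, i.e.\ a chain cover of size $k$, which is minimum because $k$ is the width.

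The sweep performs $O(|E|)$ \merge{}s and \sizesplit{}s, $O(k)$ insertions to create fresh chains, and $O(|V|)$ queries, all on dictionaries of size at most $k$; together with $O(|V|+|E|)$ for the topological order and $O(|V|)$ to emit the cover, this part runs in $O((|V|+|E|)\log k)$ provided each dictionary operation takes amortized $O(\log k)$, and since it is deterministic the whole algorithm inherits determinism and near‑linearity from the MF subroutine, yielding $O(T_{MF}(|E|)+(|V|+|E|)\log k)$. The main obstacle is exactly this data structure: classical mergeable dictionaries (Farach and Thorup; Iacono and \"Ozkan) support \merge{} and \splitt{} by key, but here we additionally need \sizesplit{}, the split by \emph{rank}, together with rank/size queries, all in amortized $O(\log k)$, and this has to coexist with the potential‑based analysis that keeps arbitrary interleaved merges cheap — so one cannot simply read off subtree sizes as in a plain balanced search tree. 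Most of the work therefore goes into augmenting that structure with size information and re‑deriving its amortized bound so that \merge{}, \sizesplit{} and the queries each cost amortized $O(\log k)$.
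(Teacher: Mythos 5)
Your plan is essentially the paper's own algorithm: the same flow reduction with unit demands on split vertices, one max-flow call to get a minimum flow of value $k$, and a single topological sweep that extracts a minimum chain \emph{decomposition} by moving chain indices along positive-flow edges using mergeable dictionaries boosted with \sizesplit{} (your per-edge token dictionaries versus the paper's per-vertex sets $I_v$, and your freshly created chains versus splitting them off a persistent source set $I_s$, are only bookkeeping variants), with the identical correctness and $O(T_{MF}(|E|)+(|V|+|E|)\log k)$ accounting. The one step you defer --- \sizesplit{} in amortized $O(\log k)$ --- is less delicate than you fear: the paper augments Karczmarz's trie-based mergeable dictionaries with subtree leaf counters, observes that \select{} is read-only and hence leaves the potential (total number of trie nodes) unchanged, and that maintaining the counters during \merge{} and \splitt{} adds only constant work per touched node, so the existing amortized analysis carries over verbatim.
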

  
  Thus by applying the flow algorithm of Chen et~al.~\cite{chen2022maximum} we solve the problem in almost-linear time for the first time.
  
  \begin{restatable}{corollary}{almostLinear}
   Given a $k$-width DAG $G=(V,E)$ we can compute a minimum chain cover in time $O(|E|^{1+o(1)})$ w.h.p.
  \end{restatable}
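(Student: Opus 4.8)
\noindent The corollary follows immediately once \Cref{thm:mainTheorem} is proved: instantiating it with the maximum-flow algorithm of Chen et~al.\ gives $T_{MF}(|E|)=|E|^{1+o(1)}$ w.h.p., which absorbs the $(|V|+|E|)\log k$ term. So I describe the plan for proving \Cref{thm:mainTheorem} itself. First I would build the flow network via the classical reduction to \emph{minimum flow}: split every vertex $v$ into $v^{in}\to v^{out}$ with lower bound $1$; for every edge $(u,w)$ of $G$ add the edge $u^{out}\to w^{in}$; and add a global source $s$ with an edge to every $v^{in}$ and a global sink $t$ with an edge from every $v^{out}$. This yields a DAG $\flowG=(\flowV,\flowE)$ with $|\flowV|,|\flowE|=O(|V|+|E|)$ in which a minimum $s$-$t$ flow $f$ is integral, has value exactly the width $k$, and --- $\flowG$ being acyclic --- is cycle-free. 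After the standard transformation removing lower bounds, $f$ is extracted from a single maximum-flow call, i.e.\ in $O(T_{MF}(|E|))$ time.

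The heart of the proof is turning $f$ into a minimum chain cover without paying the $\Omega(k|V|)$ cost of an explicit path decomposition. The plan is to decompose $f$ into $k$ integral $s$-$t$ paths only \emph{implicitly}, assigning every vertex of $G$ to exactly one of them during a single topological scan of $\flowV$. I would view the $k$ units leaving $s$ as tokens $1,\dots,k$ and keep on each edge $e$ a set of $f(e)$ token identifiers: at $s$ the tokens are partitioned among the flow-carrying out-edges; at $v^{in}$ I \merge{} the in-edges' sets into one set $M_v$ of size $\ge 1$ (the flow through $v^{in}\to v^{out}$) and declare $v$ covered by chain $\min M_v$; at $v^{out}$ I repeatedly \sizesplit{} $M_v$ into the out-edges' sets of the prescribed sizes (which sum to $|M_v|$ by conservation; the actual split points are irrelevant for correctness, so the cheap smallest-first operation suffices). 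Flow conservation and acyclicity ensure that each token $i$ traces a simple $s$-$t$ path $P_i$ in $\flowG$; the vertices with $\min M_v=i$, listed in topological order, form a subsequence of $P_i$ and hence a valid chain $C_i$ (vertices of a path are pairwise comparable in the transitive closure). Every vertex is assigned exactly once, so $\chaincover=\{C_1,\dots,C_k\}$ covers $V$ with total size exactly $|V|$, and $|\chaincover|=k=\mathrm{width}(G)$ by Dilworth, so it is minimum; moreover no $C_i$ is empty, since a chain cover of $V$ by at most $k$ chains must use all $k$. Appending $v$ to the list of its chain during the same scan reports each chain already sorted, in $O(|V|)$ total time.

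For the running time, the scan performs $O(\sum_v\deg^-_{\flowG}(v))=O(|V|+|E|)$ \merge{} operations and $O(\sum_v\deg^+_{\flowG}(v))=O(|V|+|E|)$ \sizesplit{} operations on sets of size at most $k$, plus $O(|V|+|E|)$ bookkeeping and $O(|V|)$ reporting. It therefore suffices to store the token sets in a mergeable-dictionary structure supporting \merge{}, \splitt{}, \sizesplit{} and the few needed queries (such as \select{}) in amortized $O(\log k)$ time --- which is exactly the data structure I would develop in the body of the paper, a variant of the mergeable dictionaries of Iacono and \"Ozkan augmented with \sizesplit{} and with amortized-logarithmic queries. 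Summing the pieces gives $O(T_{MF}(|E|)+(|V|+|E|)\log k)$, whence the corollary by plugging in Chen et~al.

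The reduction and the covering argument are routine once the token routing is set up; I expect the real obstacle to be the data structure. Classical mergeable dictionaries support \merge{} and \splitt{} in amortized $O(\log n)$, but \sizesplit{} --- whose separating key is not known in advance --- together with amortized-logarithmic \emph{queries} is not available off the shelf, and the hard part will be designing a single potential function that pays for all of these operations across the $\Theta(|V|+|E|)$-long sequence generated by the scan.
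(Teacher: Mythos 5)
Your proposal is correct and follows essentially the same route as the paper: the corollary is just \Cref{thm:mainTheorem} instantiated with the almost-linear-time maximum-flow algorithm of Chen et~al., and your sketch of the theorem itself (flow reduction, single topological scan routing token/index sets, mergeable dictionaries boosted with \sizesplit{} in amortized $O(\log k)$) matches the paper's argument. The only cosmetic difference is that the paper realizes the boosted dictionary via Karczmarz's trie-based (or Farach--Thorup's segment-merge) structure augmented with subtree-size counters so that \sizesplit{} reduces to \select{} plus \splitt{}, rather than the Iacono--\"Ozkan variant you suggest.
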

  
  Moreover, our solution in \Cref{thm:mainTheorem} uses a MF solver as a black box and it is deterministic otherwise. Therefore, we provide a deterministic MCC solution in \emph{near-linear time} if one is found for MF. At the core of our solution we use mergeable dictionaries boosted with the \sizesplit{} operation to efficiently transform the flow outputted by the MF solver into an MCC. \emph{Mergeable dictionaries} is a data structure maintaining a dynamic partition $\partition$ of the natural numbers $\{1,\ldots,k\}$ (the reuse of $k$ is intentional as this is how our approach will use the data structure), starting from the partition $\partition = \{\{1,\ldots, k\}\}$ and supporting the following operations, for $K, K_1, K_2\in\partition$:
  \begin{itemize}
      \item[--] \search{}$(K, j)$: returns $\max_{i \in K, i \le j} i$ if any such element exists\footnote{This query is also known as \emph{predecessor} query in the literature.}.
      \item[--] \merge{}$(K_1,K_2)$: replaces $K_1$ and $K_2$ by $K_1\cup K_2$ in $\partition$.
      \item[--] \splitt{}$(K, j)$: replaces $K$ by $K'=\{i\in K\mid i\le j\}$ and $K\setminus K'$ in $\partition$.
  \end{itemize}
  
  Note that the \merge{} operation does not assume that $\max_{i\in K_1}i < \min_{i\in K_2}i$ (sets are non-overlapping) as generated by the \splitt{} operation. If the previous condition (or the analogous $\max_{i\in K_2}i < \min_{i\in K_1}i$) is assumed, the operation is known as \join{}. Mergeable dictionaries have applications in Lempel-Ziv decompression~\cite{farach1998string,bille2020decompressing,puglisi2019lempel}, mergeable trees~\cite{georgiadis2011data} and generalizations of union-find-split~\cite{lai2008complexity}. In this paper we show how to modify them to obtain a fast MCC algorithm. Next, we review the different approaches used to implement mergeable dictionaries in the literature.\\

\par{\noindent\textbf{Mergeable dictionaries.}} The first efficient implementation of mergeable dictionaries can be derived\footnote{The authors of~\cite{farach1998string} do not define mergeable dictionaries formally.} from the \emph{segment merge} strategy proposed by Farach and Thorup~\cite{farach1998string} to efficiently \merge{} two self-balanced binary search trees, assuming that operations \search{}, \splitt{} and \join{} are implemented in logarithmic time. The authors showed how to implement the \merge{} operation by minimally \splitt{}ing both sets into non-overlapping sets and pairwise \join{}ing the resulting parts. They proved that after $|V|+|E|$ operations (the abuse of notation is again intentional) their strategy works in $O(\log{k}\cdot\log{(|V|+|E|)})$ amortized time per operation. Later, Iacono and \"Ozkan~\cite{iacono2010mergeable} presented a mergeable dictionaries implementation running in $O(\log{k})$ amortized time per operation, based on biased skip lists~\cite{bagchi2005biased}. The same amortized running time was later achieved by Karczmarz~\cite{karczmarz2016simple} with a very simple approach using tries~\cite{de1959file} to represent the sets.\\

Our algorithm for MCC computes a minimum flow $f^*$ encoding an MPC and then extracts an MCC from $f^*$ by processing $G$ in topological order and querying  mergeable dictionaries boosted with the \sizesplit{} operation. Formally, we require the following operations, for $K,K_1,K_2\in\partition, s \in \{1,\ldots,|K|-1\}$:

\begin{itemize}
      \item[--] \some{}$(K)$: returns some element $i\in K$.
      \item[--] \merge{}$(K_1,K_2)$: replaces $K_1$ and $K_2$ by $K_1\cup K_2$ in $\partition$.
      \item[--] \sizesplit{}$(K, s)$: replaces $K$ by $K'\subseteq K, |K'| = s$ and $K\setminus K'$ in $\partition$.
  \end{itemize}

In \Cref{sec:mergeable-dict} we show how to implement these operations in $O(\log{k})$ amortized time each. Then, in \Cref{sec:almost-linear-mcc} we show our MCC algorithm using the previously described data structure. Besides the theoretical improvement already explained, we highlight the simplicity of our solutions, both in our proposal for boosted mergeable dictionaries as well as in our algorithm for MCC.

\section{Notation and preliminaries}\label{sec:preliminaries}

\par{\textbf{Graphs.}} For a vertex $v\in V$ we denote $N^{-}(v)$ ($N^{+}(v)$) to be the set of \emph{in(out)-neighbors} of $v$ that is $N^-(v) = \{u\in V\mid (u,v)\in E\}$ ($N^+(v) = \{w\in V\mid (v,w)\in E\}$). A $v_1v_\ell$-path is a sequence of vertices $P = v_1,\ldots,v_\ell$ such that $(v_i, v_{i+1}) \in E$ for $i
\in \{1,\ldots,\ell-1\}$, in this case we say that $v_1$ \emph{reaches} $v_\ell$. We say that $P$ is \emph{proper} if $\ell \ge 2$ and that $P$ is a cycle if $v_1 = v_\ell$. A \emph{directed acyclic graph} (DAG) is a graph without proper cycles. In a DAG we can compute, in linear time~\cite{kahn1962topological,tarjan1976edge}, a \emph{topological order} $v_1,\ldots,v_{|V|}$ of its vertices such that for every $i<j$, $(v_j, v_i) \not\in E$. In this paper we assume that $G$ is a DAG and since our algorithms run in time $\Omega(|V|+|E|)$, we assume that an input \emph{topological order} is given. A chain is a sequence of vertices $C = v_1,\ldots,v_{\ell'}$ such that for each $i
\in \{1,\ldots,\ell'-1\}$ $v_i$ reaches $v_{i+1}$. We denote $|C| = \ell'$ to the length of the chain. A \emph{chain cover} $\chaincover$ is a set of chains such that every vertex appears in some chain of $\chaincover$. We say that it is a \emph{chain decomposition} if every vertex appears in exactly one chain of $\chaincover$ and a \emph{path cover} if every chain of $\chaincover$ is a path, in this case we denote it $\pathcover$ instead. We denote $||\chaincover||$ to the total length of a chain cover that is $||\chaincover|| = \sum_{C\in\chaincover} |C|$.
An antichain is a subset of vertices $A\subseteq V$ such that for $u,v\in A, u\neq v$, $u$ does not reach $v$. The minimum size of a chain cover is known as the \emph{width} of $G$ and we denote it $k$. \\

\par{\noindent\textbf{Flows.}} Given a function of \emph{demands} $d:E\rightarrow \mathbb{N}_0$ and $s, t\in V$, an $st$-\emph{flow} is a function $f:E\rightarrow \mathbb{N}_0$ satisfying \emph{flow conservation} that is $inFlow_v := \sum_{u\in N^-(v)}f(u,v) = \sum_{w\in N^+(v)}f(v,w) := outFlow_v$ for each $v \in V\setminus \{s, t\}$, and \emph{satisfying the demands} that is $f(e) \ge d(e)$ for each $e\in E$. A \emph{flow decomposition} of $k$ (the reuse of notation is again intentional) paths of $f$ is a collection $\mathcal{D} = P_1,\ldots,P_k$ of $st$-paths such that for each edge $e\in E$, $f(e) = |\{P_i\in \mathcal{D}\mid e\in P_i\}|$.\footnote{This is a simplified definition of flow decomposition which suffices for our purposes.} The \emph{size} $|f|$ of $f$ is defined as the net flow entering $t$ (equivalently exiting $s$ by flow conservation) that is $|f| = inFlow_{t}-outFlow_t$. The problem of \emph{minimum flow} looks for a feasible $st$-flow of minimum size. Finding an MPC can be reduced to decompose a specific minimum flow~\cite{ntafos1979path}, we will revisit this reduction in \Cref{sec:almost-linear-mcc}. The same techniques applied for the problem of \emph{maximum flow} can be used in the context of the minimum flow problem~\cite{ciurea2004sequential,bang2008digraphs}. In fact, for MPC one can directly apply a maximum flow algorithm with \emph{capacities} at most $|V|$~\cite[Theorem 2.2 (full version)]{caceres2022sparsifying}.\\

\par{\noindent\textbf{Data structures.}} A \emph{self-balancing binary search tree} such as an AVL-tree or a red-black tree~\cite{guibas1978dichromatic} is a binary search tree supporting operations \search{}, \splitt{} and \join{} in logarithmic time each (in the worst case). A (binary) \emph{trie}~\cite{de1959file} is a binary tree representing a set of integers by storing their \emph{binary representation} as \emph{root-to-leaf} paths of the trie. In our tries all root-to-leaf paths have the same length $\lfloor \log{k} \rfloor + 1$. For a data structure supporting a set of operations, and a potential function $\phi$ capturing the state of the data structure, we say that the \emph{amortized} time of an operation equals to its (worst-case) running time plus the change $\Delta \phi$ in the potential triggered by the operation. If we apply a sequence of $O(|V|+|E|)$ operations whose amortized time is $O(\log{k})$ the total (worst-case) running time is $O((|V|+|E|)\log{k})$.

\section{Mergeable dictionaries with \sizesplit{}}\label{sec:mergeable-dict}

We show how to implement the \emph{boosted} mergeable dictionaries supporting operations \some{}, \merge{} and \sizesplit{} in $O(\log{k})$ amortized time each. To achieve this result we modify an existing solution of mergeable dictionaries implementing operations \search{}, \merge{} and \splitt{} by adding the \select{} operation. Formally for $K\in\partition, s \in {1,\ldots,|K|}$,

\begin{itemize}
    \item[--] \select{}$(K,s)$: returns the $s$-th smallest element in $K$.
\end{itemize}

With the \select{} operation we can use (normal) mergeable dictionaries to implement \some{} and \sizesplit{} as follows:

\begin{itemize}
    \item[--] \some{}$(K) \gets$ \search{}$(K, k)$.
    \item[--] \sizesplit{}$(K,s) \gets$ \splitt{}$(K, $\select{}$(K,s))$.
\end{itemize}

While for \some{} it suffices to do a \search{} with a known upper bound (recall that $k$ is the maximum element in the universe considered), in the case of \sizesplit{} we can first \select{} the corresponding pivot and use this pivot to \splitt{} the set by its value, obtaining the desired sizes for the split. 

This reduction allows us to obtain the boosted mergeable dictionaries by simply implementing the \select{} operation with logarithmic amortized cost (\sizesplit{} can be seen as one call to \select{} followed by a separate call to \splitt{}). Moreover, if the implementation of \select{} does not modify the data structure (and thus the potential $\phi$), its amortized time equals its (worst-case) running time (as $\phi$ does not change). We show that this is indeed the case in both the segment merge strategy of Farach and Thorup~\cite{farach1998string}, and the trie implementation of Karczmarz~\cite{karczmarz2016simple}, the latter achieving the desired running time.

The mergeable dictionaries based on segment merge represent each set as a self-balancing binary search tree. As such, the \select{} operation can be implemented in $O(\log{k})$ time by storing the sub-tree sizes at every node of the tree, which can be maintained (updated when the tree changes) in the same (worst-case) running time as the normal operations of the tree (see e.g.~\cite{knuth1998art}).

\begin{figure}
     \centering
     \begin{subfigure}[b]{0.48\textwidth}
         \centering
         \includegraphics[width=\textwidth]{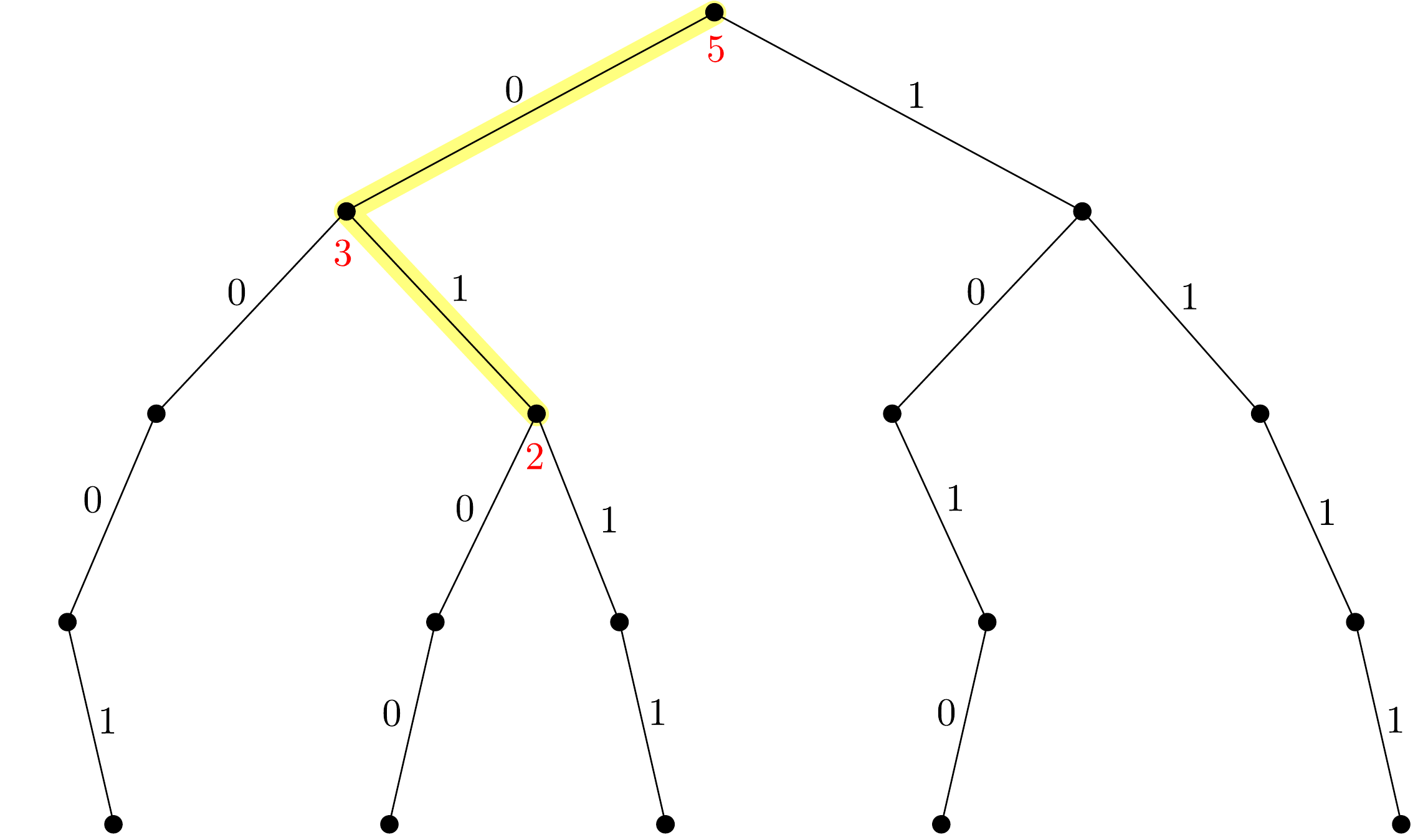}
     \end{subfigure}
     \hfill
     \begin{subfigure}[b]{0.48\textwidth}
         \centering
         \includegraphics[width=\textwidth]{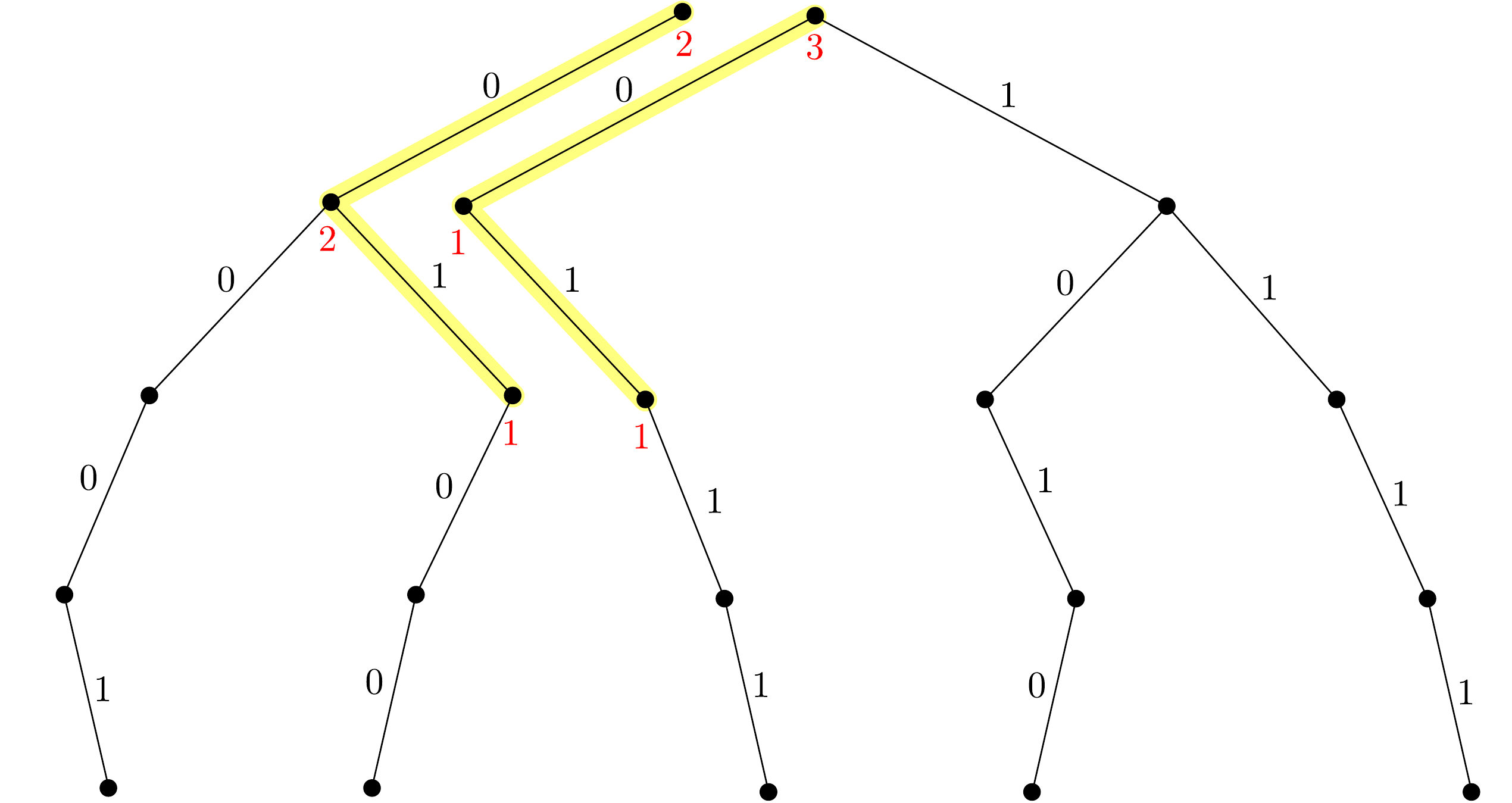}
     \end{subfigure}
        \caption{Result of calling \sizesplit{}$(K=\{1,4,7,10,15\},2)$ on a trie representation of boosted mergeable dictionaries. The binary representation of numbers in the set are spelled as root-to-leaf paths of the trie. Some number of leaves' counters are written in red under their respective nodes. Those counters are used to answer $4 \gets$ \select{}$(K, 2)$. After this, \splitt{}$(K,4)$ is performed. The path $P$ representing the longest prefix ($01$) between the binary representations of $4$ ($0100$) and $7$ ($0111$) is highlighted in yellow. The right figure shows the end result, note that only the counters in the nodes of $P$ change.}
        \label{fig:size-split-trie}
\end{figure}

Similarly, the implementation of Karczmarz~\cite{karczmarz2016simple} represents every set as a trie of its elements. As discussed in \Cref{sec:preliminaries}, a trie stores its elements as their binary representation encoded as (equal length) root-to-leaf paths of the trie. For example, if $k=6$ the corresponding binary representation of $3$ is $011$, which is represented as the root-to-leaf path following the left child, then its right child and then its right child.
Analogous to binary search trees, the nodes of a trie can be augmented to store the number of leaves in their respective sub-trees. If such augmentation of a trie is performed, then the operation \select{} can be implemented in $O(\log{k})$ (worst-case) time similar to the implementation on binary search trees, since the leaves in the trie follow the same order as the elements they represent: for a query \select{}$(s)$ on a trie node, it suffices to look at the number of leaves $l$ (elements) under the left child, if $l \ge s$ we continue to the left child answering \select{}$(s)$, otherwise we continue to the right child answering \select{}$(s-l)$.

Karczmarz~\cite{karczmarz2016simple} showed that \splitt{}$(K, j)$ can be performed by finding the root-to-node path $P$ corresponding to the longest common prefix between the binary representations of \search{}$(K,j)$ and of the smallest value greater than $j$ in $K$. It then splits the trie by removing the right children of the nodes of $P$ (to form $K'$) and then joining those nodes as right children of a copy of $P$ (to create $K\setminus K'$). Note that this procedure can be easily augmented to maintain the number of leaves under each node: only the nodes in $P$ (and in its copy) decrease their value by the number of leaves of their lost children. \Cref{fig:size-split-trie} shows an example of the \splitt{} operation. \merge{}$(K_1,K_2)$ is implemented as a simple recursive algorithm that at every step keeps one of the (common) nodes and then merges the corresponding left and right sub-trees (if one of those sub-trees is empty then it just keeps the other sub-tree, we refer to the original work \cite{karczmarz2016simple} for details). In this case the maintenance of the number of leaves can be performed when returning from the recursive calls: simply recompute the number of leaves as the sum of the number of leaves of their two children. Each of these computations is a constant time operation, and thus they do not change the asymptotic running time of \merge{}. We then obtain the following lemma.
 
 \begin{lemma}\label{lemma:mergeable-dict}
 There exists a data structure maintaining a dynamic partition $\partition$ of $\{1,\ldots,k\}$ starting from $\partition = \{\{1,\ldots, k\}\}$ and answering operations \some{}, \merge{} and \sizesplit{} such that for a sequence of $n=\Omega(k)$ operations\footnote{This requirement comes from the fact that mergeable dictionaries actually start from an empty collection of sets, however, one can create the singleton sets and merge then in total $O(k\log{k})$ time.} it answers in total $O(n\log{k})$ time.
 \end{lemma}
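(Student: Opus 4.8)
The plan is to assemble the ingredients developed above. I would start from Karczmarz's trie-based mergeable dictionaries~\cite{karczmarz2016simple}, which support \search{}, \splitt{} and \merge{} in $O(\log{k})$ amortized time over any sequence of $\Omega(k)$ operations (the $\Omega(k)$ requirement absorbing the $O(k\log{k})$ cost of creating the $k$ singletons and merging them into $\{1,\ldots,k\}$, as noted in the footnote). The first step is to augment every trie node with the number of leaves in its subtree and to argue that this does not affect the asymptotics: \search{} does not touch the structure; \splitt{}$(K,j)$ only modifies the nodes of the path $P$ (and of its copy) spelling the longest common prefix of the binary representations of \search{}$(K,j)$ and of the smallest element of $K$ greater than $j$, and each such node loses one child subtree, so its counter decreases by that child's stored value --- $O(1)$ per node, $O(\log{k})$ in total; \merge{} can recompute each touched node's counter as the sum of its two children's counters while returning from the recursion, $O(1)$ per node. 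Crucially, the potential function underlying Karczmarz's amortized analysis depends only on the shapes of the tries and not on the stored counters, so the $O(\log{k})$ amortized bounds carry over verbatim. The same augmentation applied to the segment-merge strategy of Farach and Thorup~\cite{farach1998string} gives the alternative (slightly slower) implementation mentioned earlier.

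Next I would implement \select{}$(K,s)$ by descending the trie of $K$ from the root: at a node let $l$ be the counter of its left child (taken as $0$ if that child is absent); if $l \ge s$ recurse into the left child with parameter $s$, otherwise recurse into the right child with parameter $s-l$; at a leaf return the element it encodes. Since the leaves of the trie appear left to right in increasing order of the integers they spell, this returns the $s$-th smallest element of $K$ in $O(\log{k})$ worst-case time; because \select{} performs no modification, it leaves the potential unchanged and so its amortized cost equals its worst-case cost $O(\log{k})$.

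Finally, I would realize the boosted operations as \some{}$(K) \gets$ \search{}$(K,k)$ --- which returns $\max_{i\in K}i \in K$ since $k$ upper-bounds the universe --- and \sizesplit{}$(K,s) \gets$ \splitt{}$(K,\,$\select{}$(K,s))$ --- which, splitting $K$ at its $s$-th smallest element $p$, produces $\{i\in K\mid i\le p\}$ of size exactly $s$ together with its complement of size $|K|-s$, as required, and is well defined since $s \le |K|-1 < |K|$. Each boosted operation is thus $O(1)$ base operations, hence $O(\log{k})$ amortized, and summing over the sequence of $n=\Omega(k)$ operations together with the $O(k\log{k}) = O(n\log{k})$ preprocessing yields the claimed total running time $O(n\log{k})$. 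The only point needing care --- and the step I expect to be the main obstacle --- is confirming that the augmentation is transparent to Karczmarz's potential argument; but this holds because the counters are a deterministic function of the trie shapes and are read, yet never used to alter those shapes, by the new operations.
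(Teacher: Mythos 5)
Your proposal is correct and follows essentially the same route as the paper: augment Karczmarz's tries with subtree leaf counters (maintained in $O(1)$ per touched node during \splitt{} and \merge{}), implement \select{} by a counter-guided descent that leaves the potential (number of trie nodes) unchanged, and realize \some{} as \search{}$(K,k)$ and \sizesplit{} as \splitt{} at the \select{}ed pivot. No gaps to report.
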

 \begin{proof}
     We first note that operations \some{}, \merge{} and \sizesplit{} can be implemented in the same asymptotic (worst-case) running time as operations \search{}, \merge{} and \splitt{} in the data structure of Karczmarz~\cite{karczmarz2016simple}, respectively. Indeed, as previously discussed, \some{} is implemented as one call to \search{}, \merge{} is implemented in the same way as in~\cite{karczmarz2016simple} but taking care of the number-of-leaves counters' updates which does not affect the asymptotic running time, and \sizesplit{} is implemented as one call to \select{} (in $O(\log{k})$ time) followed by one call to \splitt{} (also in $O(\log{k})$ time). For the amortized analysis we reuse the potential function used by Karczmarz~\cite{karczmarz2016simple}, namely, the number of nodes on all tries. Operations \some{} and \merge{} follow the same potential change as in~\cite{karczmarz2016simple} and thus have an $O(\log{k})$ amortized running time. Finally, since \select{} does not change the total number of nodes (nor any of the tries) the potential change of a \sizesplit{} is the same as the one of a \splitt{}, that is $O(\log{k})$ as in~\cite{karczmarz2016simple}. The lemma follows by using the amortized running times of the data structure's operations.
 \end{proof}

\section{An almost linear time algorithm for MCC}\label{sec:almost-linear-mcc}

We show how to use \Cref{lemma:mergeable-dict} to obtain a fast MCC algorithm. Our algorithm computes a minimum flow $f^*$ encoding an MPC of $G$ and then uses the data structure from \Cref{lemma:mergeable-dict} to efficiently extract an MCC from $f^*$. Next, we describe the well known~\cite{ntafos1979path} reduction from MPC to MF by following the notation of~\cite[Section 2.3 (full version)]{caceres2022sparsifying}.

\begin{figure}
      \centering
      \includegraphics[width=\textwidth]{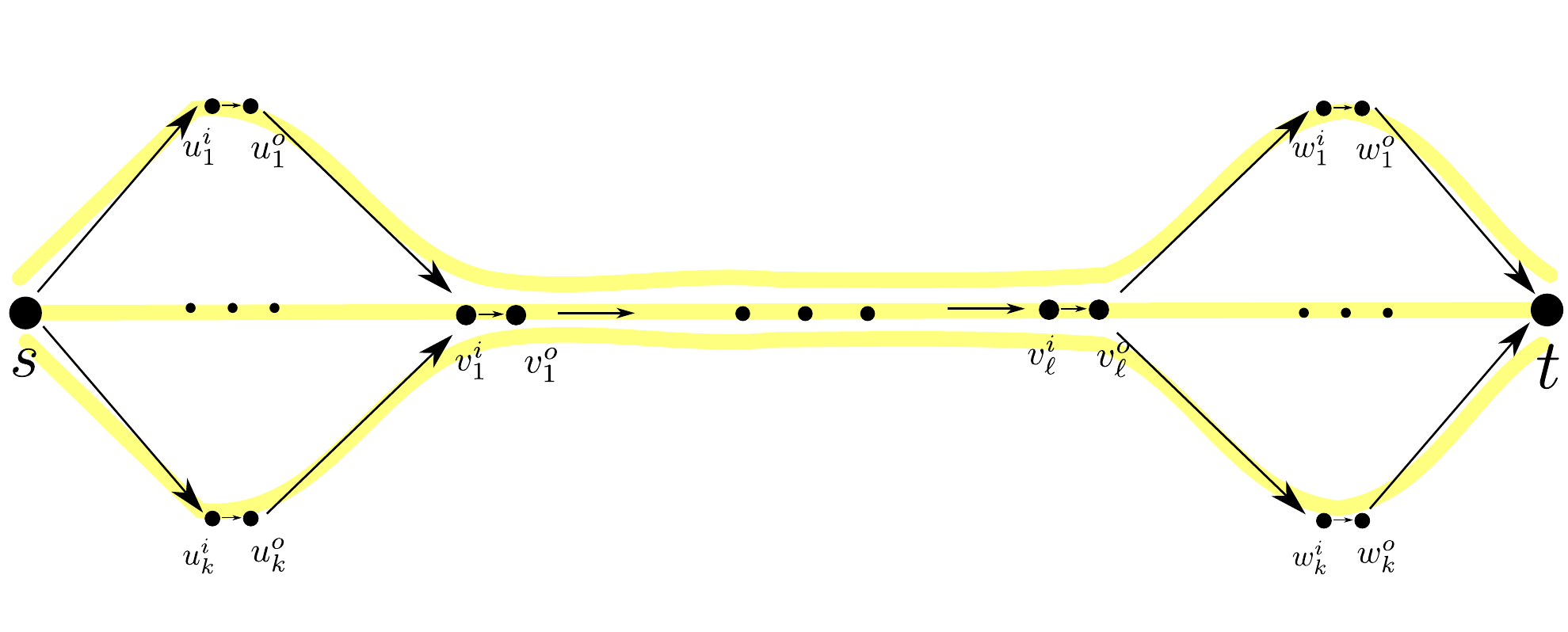}
      \caption{Flow reduction (demands are not shown) of the graph of \Cref{fig:worst-case-mpc} and a minimum flow on it. Only edges with positive flow are shown. The flow is implicitly presented as a flow decomposition showing every path highlighted in yellow, this flow decomposition corresponds to an MPC of the original DAG.}
      \label{fig:flow-reduction}
\end{figure}

Given the DAG $G$ we build its \emph{flow reduction} $\flowG = (\flowV, \flowE)$ as the graph obtained by adding a global source $s$, a global sink $t$ and splitting every vertex $v\in V$ into two copies connected by an edge. Additionally, the first copy, $v^i$, is connected from the in-neighbors of $v$ and the second copy, $v^o$, is connected to the out-neighbors of $v$. Formally, $\flowV = \{s,t\} \cup \{v^i \mid v\in V\} \cup \{v^o \mid v\in V\}$, and $\flowE =  \{(s, v^{i}) \mid v \in V\} \cup \{(v^{o}, t) \mid v \in V\} \cup \{(v^{i}, v^{o}) \mid v \in V\} \cup \{(u^{o}, v^{i}) \mid (u,v) \in E\}$. Note that $|\flowE| = O(|V|+|E|)$, and that $\flowG$ is also a DAG. We also define demands on the edges, $d: \flowE: \rightarrow \mathbb{N}_0$, as $1$ if the edge is of the form $(v^{i}, v^{o})$, and $0$ otherwise. Intuitively, the demands \emph{require} that at least one unit of flow goes through every vertex (of $G$), which directly translates into to the path cover condition of covering each vertex with at least one path. In fact, every flow decomposition (recall \Cref{sec:preliminaries}) of a feasible $st$-flow $f$ of $\flowG, d$ corresponds to a path cover of $G$ of size $|f|$. Moreover, every decomposition of a minimum flow $f^*$ of $\flowG, d$ corresponds to an MPC of $G$, and thus $k = |f^*|$~\cite[Section 2.3 (full version)]{caceres2022sparsifying}. \Cref{fig:flow-reduction} illustrates these ideas with an example. 

Since every vertex cannot belong to more than $|V|$ paths in an MPC, the problem can be reduced to maximum flow~\cite[Theorem 3.9.1]{bang2008digraphs}. We summarize these results in the following lemma.

\begin{lemma}[Adaptation of {\cite[Theorem 2.2 (full version)]{caceres2022sparsifying}}]\label{lemma:minFlow}
We can compute a flow $f^*$ of $\flowG, d$ such that every flow decomposition of $f^*$ corresponds to an MPC of $G$, in time $O(T_{MF}(|E|))$, where $T_{MF}(|E|)$ is the time for solving maximum flow.
\end{lemma}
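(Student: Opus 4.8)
The plan is to reduce the computation of a minimum feasible flow of $\flowG, d$ to a single maximum flow computation and then invoke the correspondence between minimum flows of $\flowG, d$ and MPCs of $G$ described above. First I would build an explicit initial feasible $st$-flow $f_0$ of $\flowG$: send one unit of flow along the path $s, v^i, v^o, t$ for every $v\in V$. This satisfies flow conservation and all demands (each edge $(v^i,v^o)$ carries exactly one unit), is built in $O(|V|)$ time, and has size $|f_0| = |V|$.

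Next I would set up the max-flow instance that shrinks $f_0$ as much as possible. Since no vertex of $G$ can lie on more than $|V|$ paths of any path cover, it suffices to minimize over flows with $f(e)\le |V|$ on every edge; this artificial capacity removes no minimum flow. I would then form the residual network $\flowG_{f_0}$ with respect to these capacities and the demands $d$ (a forward copy of $(u,w)$ with capacity $|V| - f_0(u,w)$ and a backward copy $(w,u)$ with capacity $f_0(u,w) - d(u,w)$), compute a maximum $ts$-flow $g$ in it, and return $f^*(e) = f_0(e) - g_{\mathrm{fwd}}(e) + g_{\mathrm{bwd}}(e)$. By~\cite[Theorem 3.9.1]{bang2008digraphs}, $|g|$ equals $|f_0|$ minus the minimum feasible flow value, so $f^*$ is a minimum feasible $st$-flow of $\flowG, d$.

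For the running time, $\flowG_{f_0}$ has $O(|V|)$ vertices, $O(|\flowE|) = O(|V|+|E|)$ edges and all capacities bounded by $|V|$, so the maximum flow is computed in time $T_{MF}(|V|+|E|) = O(T_{MF}(|E|))$; the $O(|V|+|E|)$ bookkeeping to build $f_0$, $\flowG_{f_0}$ and $f^*$ is subsumed since solving maximum flow already takes $\Omega(|\flowE|)$ time. Finally, $f^*$ being a minimum feasible flow of $\flowG, d$, every flow decomposition of $f^*$ corresponds to an MPC of $G$ (with $k = |f^*|$) by the classical reduction~\cite{ntafos1979path} as restated in~\cite[Section 2.3 (full version)]{caceres2022sparsifying}.

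The main obstacle I anticipate is accounting rather than depth: justifying that capping edge flow at $|V|$ is without loss of generality (so capacities stay polynomially bounded and $T_{MF}$ applies) and verifying that the residual-network reduction yields a flow that is simultaneously $d$-feasible and of minimum size. Both are standard, but they must be stated precisely because the reduction is to \emph{maximum} flow in the $ts$ direction on the residual graph rather than to a minimum-flow solver directly.
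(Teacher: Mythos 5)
Your proposal takes essentially the route the paper itself relies on (the paper does not spell out a proof but delegates it to the cited results): bound all capacities by $|V|$ --- justified since a minimum flow on the DAG $\flowG$ has value $k\le|V|$ and hence carries at most $|V|$ units on any edge --- start from the trivial feasible flow of value $|V|$ that routes one unit through each path $s,v^i,v^o,t$, and cancel excess flow with a single maximum $ts$-flow computation in the residual network, exactly as in~\cite[Theorem 3.9.1]{bang2008digraphs} and~\cite[Theorem 2.2 (full version)]{caceres2022sparsifying}; the correspondence between decompositions of the resulting minimum flow and MPCs is then the classical reduction of~\cite{ntafos1979path}. The one slip is the sign convention in the final combination step: with your definitions (forward copy of $(u,w)$ with capacity $|V|-f_0(u,w)$, backward copy $(w,u)$ with capacity $f_0(u,w)-d(u,w)$), pushing $g$ on the forward copy \emph{increases} the flow on $(u,w)$ and pushing on the backward copy \emph{decreases} it, so the minimum flow is $f^*(e)=f_0(e)+g_{\mathrm{fwd}}(e)-g_{\mathrm{bwd}}(e)$; the formula you wrote, $f_0(e)-g_{\mathrm{fwd}}(e)+g_{\mathrm{bwd}}(e)$, can produce negative values (e.g.\ on an edge $(u^o,v^i)$ carrying no initial flow) and violate the demands. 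With that sign corrected, the argument is complete and matches the intended proof.
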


\begin{algorithm}[t]
\DontPrintSemicolon
\KwIn{A directed acyclic graph $G=(V,E)$.}
\KwOut{A minimum chain decomposition $\chaincover = C_1, \ldots, C_k$ of $G$.}\vspace{0.5em}

    $(\flowG , d) \gets$ Build the \emph{flow reduction} of $G$ detailed in \Cref{sec:almost-linear-mcc}\;
    $f^*, k = |f^*| \gets$ Use \Cref{lemma:minFlow} to obtain a minimum flow of $(\flowG , d)$\;
    
    Initialize $C_i$ as an empty list for $i \in \{1,\ldots,k\}$\;
    
    $I_s \gets \{1,\ldots,k\}$\;

    \For{$v \in V$ in topological order}{
        $I_v \gets$ Take $f^*(s, v^i)$ elements from $I_s$\;
        \For{$u^o \in N^{-}(v^i)$}{
            $I_{uv} \gets$ Take $f^*(u^o, v^i)$ elements from $I_u$\;
            $I_v \gets I_v \cup I_{uv}$
        }
        $i \gets$ Choose an element from $I_v$\;
        $C_i$.append($v$)\;
    }
    \Return $C_1,\ldots,C_k$\;
 \caption{\label{alg:simpleVersion}Non-optimized pseudocode for our MCC algorithm. A naive implementation of this algorithm obtains an $O(||\pathcover||)$ running time as explained in this manuscript.}
\end{algorithm}

A decomposition algorithm is simple in this case: start from $s$ and follow a path $P$ of positive flow edges until arriving at $t$, and then update $f^*$ decreasing the flow on the edges of $P$ by one. Repeat this process until no flow remains. The $st$-paths obtained during the decomposition can be easily transformed into an MPC $\pathcover$ of $G$ (trim $s$ and $t$ and replace $v^i,v^o$ by $v$ on each path, see e.g.~\cite{kogan2023faster}).

If implemented carefully (see e.g.~\cite[Lemma 1.11 (full version)]{kogan2022beating}), the previous algorithm runs in time $O(||\pathcover||)$ and it outputs a valid MCC $\pathcover$ (recall that every MPC is an MCC). However, as shown in \Cref{fig:worst-case-mpc}, $||\pathcover||$ can be $\Omega(k|V|)$ in the worst case. Our algorithm overcomes this barrier by instead directly extracting (from $f^*$) a minimum chain decomposition (MCD, recall \Cref{sec:preliminaries}) and thus its total length is exactly $|V|$.

The main idea to extract an MCD $\chaincover = C_1,\ldots,C_k$ from $f^*$ is to compute, for each vertex $v\in\flowV$, the set $I_v \subseteq \{1,\ldots,k\}$ of indices such that $v$ would belong to paths $\pathcover_v = \{P_i \mid i \in I_v\}$ in a flow decomposition $\mathcal{D} = P_1, \ldots, P_k$ of $f^*$. Note that $I_s = I_t = \{1,\ldots,k\}$ by construction of $\flowG, d$. To efficiently compute these sets, we process the vertices in a topological order of $\flowG$, for example $s, v_1^i, v_1^o,\ldots,v_{|V|}^i,v_{|V|}^o,t$ (recall that a topological order $v_1,\ldots,v_{|V|}$ of $G$ is assumed as input).  When processing vertex $v$ we compute $I_v$ as follows: for every $u\in N^-(v)$ we \emph{take} (exactly) $f^*(u, v)$ elements from $I_{u}$, let us denote $I_{uv}$ to these elements. Then, we compute $I_{v}$ as the \emph{union} $\bigcup_{u \in N^-(v)} I_{uv}$. And finally, we \emph{take an arbitrary element} $i \in I_v$ and append $v$ to $C_i$. \Cref{alg:simpleVersion} shows a pseudocode for this algorithm.

\begin{algorithm}[t]
\DontPrintSemicolon
\KwIn{A directed acyclic graph $G=(V,E)$.}
\KwOut{A minimum chain decomposition $\chaincover = C_1, \ldots, C_k$ of $G$.}\vspace{0.5em}

    $(\flowG , d) \gets$ Build the \emph{flow reduction} of $G$ detailed in \Cref{sec:almost-linear-mcc}\;
    $f^*, k = |f^*| \gets$ Use \Cref{lemma:minFlow} to obtain a minimum flow of $(\flowG , d)$\;
    
    Initialize $C_i$ as an empty list for $i \in \{1,\ldots,k\}$\;
    
    Initialize mergeable dictionaries maintaining a partition of $\{1,\ldots,k\}$\;
    
    $I_s \gets \{1,\ldots,k\}$\;

    \For{$v \in V$ in topological order}{
        $I_v, I_s \gets$ \sizesplit{}$(I_s, f^*(s, v^i))$\;
        \For{$u^o \in N^{-}(v^i)$}{
            $I_{uv}, I_{u} \gets$ \sizesplit{}$(I_u, f^*(u^o, v^i))$\;
            $I_v \gets$ \merge{}$(I_v, I_{uv})$\;
        }
        $i \gets$ \some{}$(I_v)$\;
        $C_i$.append($v$)\;
    }
    \Return $C_1,\ldots,C_k$\;
 \caption{\label{alg:pseudocode} Our MCC algorithm running in time $O(T_{MF}(|E|)+ (|V|+|E|)\log{k})$, where $T_{MF}(|E|)$ is the time for solving maximum flow. The algorithm uses the boosted mergeable dictionaries from \Cref{sec:mergeable-dict}.}
\end{algorithm}

Note that vertices are added to their respective chains in the correct order since they are processed in topological order. 

Moreover, since indices are moved from one vertex to the other only if there is an edge between them, consecutive vertices are always connected by a path in $G$, and thus the lists $C_i$ correspond to proper chains. Finally, adding each vertex to only one such chain ensures that the end result is indeed an MCD (every vertex in exactly one chain).

An important aspect of the algorithm is that exactly $f^*(u^i,v^i)$ ($f^*(s,v^i)$) elements are \emph{taken out} of $I_u$ ($I_s$). This step is always possible thanks to flow conservation of $f^*$.

\begin{lemma}\label{lemma:correctness}
    Given a $k$-width DAG $G = (V, E)$ as input, \Cref{alg:simpleVersion} computes a minimum chain decomposition $\chaincover = C_1, \ldots, C_k$ of $G$.
\end{lemma}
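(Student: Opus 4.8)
The plan is to verify three properties of the output $\chaincover = C_1,\ldots,C_k$: (i) each $C_i$ is a valid chain of $G$; (ii) every vertex of $G$ lies in exactly one $C_i$; and (iii) there are exactly $k$ chains, i.e. the construction never gets stuck. The unifying tool is the invariant that, just before vertex $v$ is processed, the sets currently held by the "live" endpoints of partial decomposition form a partition of $\{1,\ldots,k\}$; more precisely, if we think of $f^*$ as being decomposed into $k$ $st$-paths $P_1,\ldots,P_k$, then the index sets $I_u$ maintained by the algorithm for the already-processed out-copies $u^o$ (together with $I_s$) exactly account for which of these $k$ paths currently "sit at" each such vertex, partitioned according to the fixed decomposition $\mathcal D$.

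First I would set up notation by fixing an arbitrary flow decomposition $\mathcal D = P_1,\ldots,P_k$ of $f^*$ (which exists and has exactly $k$ paths by \Cref{lemma:minFlow}, since $k = |f^*|$), and define, for a vertex $x\in\flowV$, the set $\hat I_x = \{i : x \in P_i\}$. By construction of $\flowG,d$ we have $\hat I_s = \hat I_t = \{1,\ldots,k\}$, and flow conservation gives $|\hat I_{v^i}| = |\hat I_{v^o}| = \sum_{u\in N^-(v)} f^*(u^o,v^i) + f^*(s,v^i) = f^*(v^i,v^o)$ for each $v$. Then I would prove by induction on the topological order that the algorithm can indeed always perform each $\sizesplit$ — i.e. that $I_s$ (resp. $I_u$) always has at least $f^*(s,v^i)$ (resp. $f^*(u^o,v^i)$) elements at the moment the split is invoked — and that after processing $v$, the set $I_v$ (which the algorithm uses as $I_{v^o}$, since chains move through $v$) has size exactly $f^*(v^i,v^o)\ge 1$, so the $\some(I_v)$ call on line is well-defined. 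The size bookkeeping here is exactly flow conservation: the total number of indices that must leave $I_s$ is $\sum_v f^*(s,v^i) = |f^*| = k$, and at each $v^i$ the incoming counts sum to the outgoing count $f^*(v^i,v^o)$.

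Next, for correctness of the chains, I would argue that an index $i$ is appended to $C_i$ at vertex $v$, then later again at vertex $w$, only if the algorithm moved $i$ from (a descendant of) $v$'s set to $w$'s set along a path of positive-flow edges of $\flowG$; since indices are transferred from $I_u$ to $I_v$ only across an edge $(u^o,v^i)\in\flowE$, i.e. only when $(u,v)\in E$, consecutive vertices appended to the same $C_i$ are joined by an edge of $G$, so $C_i$ is even a path (a proper chain). Processing in topological order guarantees these appends happen in the correct order, so $C_i$ is a genuine path of $G$. For the decomposition property, each vertex $v$ triggers exactly one append (line with $\some$), hence every vertex lies in exactly one chain; and there are $k$ chains by the size count above (every index in $\{1,\ldots,k\}$ eventually leaves $I_s$, travels through the graph, and since $\hat I_t = \{1,\ldots,k\}$ no index is "lost"). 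Combining: $\chaincover$ is a chain decomposition of total length $|V|$ with exactly $k = \mathrm{width}(G)$ chains, which is minimum since no chain cover can have fewer than $k$ chains by Dilworth's theorem. This establishes that \Cref{alg:simpleVersion} (and, since \Cref{alg:pseudocode} performs literally the same set operations, also that algorithm) outputs an MCD.

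The main obstacle is making the invariant precise enough to simultaneously handle all three claims without circularity — in particular, pinning down that the multiset of index-sets held across all currently-active vertices is always a partition of $\{1,\ldots,k\}$ refined compatibly with the fixed decomposition $\mathcal D$, so that "taking $f^*(u^o,v^i)$ elements" is always possible and never over-draws. This is really just an accounting argument driven by flow conservation, but one has to be careful that the arbitrary choices made by $\sizesplit$ (which $s$ elements to take) and by $\some$ (which element to pick) do not break the size invariant — which they don't, precisely because the invariant only constrains cardinalities, not identities, of the sets. Once the size invariant is in place, the chain-validity and exactly-$k$-chains parts are immediate.
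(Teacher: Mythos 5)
Your proposal follows essentially the same route as the paper: flow conservation guarantees that every \sizesplit{} can withdraw the required number of elements and that $I_v$ is nonempty when \some{} is called (the paper dispatches this accounting in one sentence of the surrounding discussion); indices are transferred only across edges of $\flowG$ that correspond to edges of $G$, so an index selected at an earlier vertex and later at another vertex certifies reachability between them; each vertex is appended exactly once; and $k=|f^*|$ equals the width by \Cref{lemma:minFlow}, so the resulting $k$ chains form a minimum chain decomposition. This matches the paper's (more condensed) inductive argument that if $v$ is appended to $C_i$ after $v'$, then $i\in I_v$ came from some $I_u$ with $u\in N^-(v)$, whence $v'$ reaches $v$.

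One claim in your write-up is false, although it does not sink the proof: consecutive vertices appended to the same $C_i$ are \emph{not} in general joined by an edge of $G$, and $C_i$ need not be a path. Between two selections of index $i$ by \some{}, that index may pass through several intermediate sets $I_x$ without being selected at those vertices, so the two appended vertices are connected only by a (possibly long) path of $G$. Indeed, if every $C_i$ were a path, the algorithm would output a path cover of total length $|V|$, contradicting the $\Omega(k|V|)$ lower bound on $||\pathcover||$ exhibited in \Cref{fig:worst-case-mpc}; the whole point of extracting chains rather than paths is that this cannot be avoided. The correct and sufficient statement is the first half of your own sentence: the index travels along positive-flow edges of $\flowG$, hence the earlier vertex reaches the later one, which is exactly the chain property the lemma needs. (Also, Dilworth's theorem is not required for minimality: with the paper's definition of width as the minimum size of a chain cover, minimality is immediate once $k=|f^*|$ is the width via \Cref{lemma:minFlow}.)
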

\begin{proof}
    By \Cref{lemma:minFlow}, every flow decomposition of $f^*$ corresponds to an MPC of $G$. We prove that each $C_i$ is a chain of $V$, since every vertex is added to exactly one $C_i$ we conclude that $\chaincover = C_1, \ldots, C_k$ is a minimum chain decomposition. Inductively, if $v$ is added after $v'$ in chain $C_i$, then $v'$ reaches $v$ in $G$. Indeed, $i \in I_v$ and in particular $i \in I_u$ for some $u \in N^-(v)$, and inductively $v'$ reaches $u$ in $G$ and thus also reaches $v$.
\end{proof}

Moreover, since only splits and unions are performed, the sets $I_v$'s and $I_{uv}$'s form a partition of $\{1,\ldots,k\}$ at any point during the algorithm's execution.

A simple implementation of sets $I_v$'s and $I_{uv}$'s as linked lists, allows us to perform the unions and element picks in constant time, but the splits in $O(f^*(u^o,v^i))$ (and $O(f^*(s,v^i))$) time each, and thus in $O(||\pathcover||)$ time in total. However, we can implement the sets $I_v$'s and $I_{uv}$'s as boosted mergeable dictionaries from \Cref{sec:mergeable-dict} to speed up the total running time. \Cref{alg:pseudocode} shows the final result.

\mainTheorem*
\begin{proof}
The correctness of the algorithm follows from the previous discussion and \Cref{lemma:correctness} since \Cref{alg:pseudocode} is an implementation of \Cref{alg:simpleVersion}. Building the flow reduction takes $O(|V|+|E|)$ time and obtaining the minimum flow $f^*$ takes $O(T_{MF}(|E|))$ time by \Cref{lemma:minFlow}. The rest of the running time is derived from the calls to the mergeable dictionaries' operations. \some{} is called $O(|V|)$ times while \sizesplit{} and \merge{} are called once per edge in the flow reduction that is $O(|\flowE|) = O(|V|+|E|)$ times. By applying \Cref{lemma:mergeable-dict} the total time of the $O(|V|+|E|)$ operation calls is $O((|V|+|E|)\log{k})$.
\end{proof}

We finish our paper by encapsulating our result into a tool that can be used to efficiently extract a set of vertex-disjoint chains $\chaincover$, encoding a set of paths $\pathcover$, from a flow $f$ that encodes $\pathcover$ as a flow decomposition. If the problem can be modeled as a maximum flow/minimum cost flow problem, the result of Chen et~al.~\cite{chen2022maximum} allows us to solve such problems in almost linear time. As a simple example, we could solve the $\ell$-cover problem (find a set of $\ell$ vertex-disjoint chains covering the most vertices) in almost linear time.

\begin{corollary}
Let $G=(V,E)$ be a DAG, and $f:E\rightarrow \mathbb{N}_0$ a flow encoding a set of $|f|$ paths $\pathcover$ of $G$ as a flow decomposition into weight-$1$ paths. In $O((|V|+|E|)\log{|f|})$ time, we can compute a set of $|f|$ vertex-disjoint chains $\chaincover$ of $G$, which can (alternatively) be obtained by removing repeated vertices from $\pathcover$.
\end{corollary}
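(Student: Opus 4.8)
The plan is to view this corollary as essentially a restatement of the algorithmic content already developed for \Cref{thm:mainTheorem}, stripped of the minimum-flow computation. Concretely, I would observe that the input here is precisely the object that \Cref{alg:pseudocode} consumes after line 2: a DAG $G$ together with a flow $f$ on its flow reduction (equivalently, a flow directly on $G$ whose decomposition into weight-$1$ paths is $\pathcover$). So the first step is to translate $f$ on $G$ into a flow on the reduction $\flowG$ in the obvious way — route $f(e)$ along $(u^o,v^i)$ for each $e=(u,v)\in E$, push one unit through each $(v^i,v^o)$ used, and balance the rest through $(s,v^i)$ and $(v^o,t)$ so that flow conservation holds at every $v^i,v^o$; this takes $O(|V|+|E|)$ time and does not change the set of $st$-path decompositions, hence $\pathcover$ is still exactly the collection of weight-$1$ paths obtained by decomposing this flow.

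Next I would run the index-propagation procedure of \Cref{alg:pseudocode} verbatim, with $k$ replaced by $|f|$: initialize boosted mergeable dictionaries over $\{1,\ldots,|f|\}$, process vertices of $G$ in topological order, and for each $v$ perform one \sizesplit{} per in-edge (plus one for the source edge), \merge{} the pieces into $I_v$, then \some{}$(I_v)$ to pick the chain index that $v$ joins. The analysis is identical to the proof of \Cref{thm:mainTheorem}: \some{} is called $O(|V|)$ times and \sizesplit{}, \merge{} are called $O(|\flowE|)=O(|V|+|E|)$ times, so by \Cref{lemma:mergeable-dict} the total time is $O((|V|+|E|)\log|f|)$, as claimed. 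Correctness that each $C_i$ is a chain is exactly \Cref{lemma:correctness}: indices move only along edges of $G$, so consecutive vertices of a chain are connected by a path, and each vertex lands in exactly one $C_i$, so the $C_i$ are vertex-disjoint.

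It remains to argue the ``which can (alternatively) be obtained by removing repeated vertices from $\pathcover$'' clause, which is the only genuinely new content. Here I would make the bookkeeping precise: identify the chain index $i$ with path $P_i$ in a fixed flow decomposition $\mathcal{D}=P_1,\ldots,P_{|f|}$ that is \emph{consistent} with the \sizesplit{} choices the data structure makes — that is, when \sizesplit{}$(I_u, f^*(u^o,v^i))$ returns a set $K'$, these are the indices of the paths of $\mathcal{D}$ that traverse the edge $(u^o,v^i)$. Such a $\mathcal{D}$ exists because the sets $I_v$ always form a partition of $\{1,\ldots,|f|\}$ and the split sizes match the flow values (by flow conservation, exactly $f^*(s,v^i)+\sum_{u}f^*(u^o,v^i) = f^*(v^i,v^o)+\ldots$ units pass through $v$, so the counts are feasible). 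Then $i\in I_v$ if and only if $P_i$ visits $v$, so $C_i$ is exactly the sequence of vertices of $P_i$ with consecutive repeats suppressed; since $v$ is appended to $C_i$ for the unique $i=\some{}(I_v)$, and since $\pathcover$ covers every vertex, the family $\{C_i\}$ is precisely $\pathcover$ with repeated vertices removed from each path. I expect the main obstacle to be stating this correspondence cleanly without re-deriving the whole flow-decomposition machinery — in particular, being careful that the data structure's arbitrary choices inside \sizesplit{} can always be retro-fitted into \emph{some} valid decomposition $\mathcal{D}$, rather than claiming it agrees with a pre-specified one; once that is phrased as ``there exists a decomposition consistent with the run,'' everything else is routine.
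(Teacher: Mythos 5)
Your proposal is correct and matches the paper's (implicit) argument: the corollary is exactly \Cref{alg:pseudocode} with the minimum-flow computation stripped out, with correctness from \Cref{lemma:correctness} and the time bound from \Cref{lemma:mergeable-dict} applied over $\{1,\ldots,|f|\}$. One small wording slip: when lifting $f$ to the reduction, the edge $(v^i,v^o)$ must carry the full number of paths through $v$ (which flow conservation at $v^i$ forces once you balance via $(s,v^i)$ and $(v^o,t)$), not ``one unit''; with that read as intended, the argument, including your ``there exists a decomposition consistent with the run'' phrasing of the last clause, is fine.
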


\bibliographystyle{plain}
\bibliography{references}
\end{document}